\newcommand{\myvec}[2][r]{%
  \gdef\@VORNE{1}
  \left(\hskip-\arraycolsep%
    \begin{array}{#1}\vekSp@lten{#2}\end{array}%
  \hskip-\arraycolsep\right)}
\def\vekSp@lten#1{\xvekSp@lten#1;vekL@stLine;}
\def\vekL@stLine{vekL@stLine}
\def\xvekSp@lten#1;{\def\temp{#1}%
  \ifx\temp\vekL@stLine
  \else
    \ifnum\@VORNE=1\gdef\@VORNE{0}
    \else\@arraycr\fi%
    #1%
    \expandafter\xvekSp@lten
  \fi}
\newtheorem{theorem}{Theorem}[section]
\def\curl{\mathrm{curl}~}
\DeclareMathOperator{\sech}{sech}
\def\dd{\displaystyle}
\def\8{\infty}
\def\p{\partial}
\journal{Journal of Computational Physics}
\begin{document}

\begin{frontmatter}

\title{Geodesic curvature driven surface microdomain formation}


\author{Melissa R. Adkins, and Y. C. Zhou\fnref{myfootnote}}
\address{Department of Mathematics, Colorado State University, Fort Collins, CO 80523-1874}
\fntext[myfootnote]{Corresponding author. Email: yzhou@math.colostate.edu}



\begin{abstract}
Lipid bilayer membranes are not uniform and clusters of lipids in a more ordered state exist within the generally disorder 
lipid milieu of the membrane. These clusters of ordered lipids microdomains are now referred to as lipid rafts. Recent
reports attribute the formation of these microdomains to the geometrical and molecular mechanical mismatch of lipids 
of different species on the boundary. Here we introduce the geodesic curvature to characterize the geometry of the
domain boundary, and develop a geodesic curvature energy model to describe the formation of these microdomains
as a result of energy minimization. Our model accepts the intrinsic geodesic curvature of any binary
lipid mixture as an input, and will produce microdomains of the given geodesic curvature as demonstrated by
three sets of numerical simulations. Our results are in contrast to the surface phase separation predicted
by the classical surface Cahn-Hilliard equation, which tends to generate large domains as a result of the
minimizing line tension. Our model provides a direct and quantified description of the structure 
inhomogeneity of lipid bilayer membrane, and can be coupled to the investigations of biological 
processes on membranes for which such inhomogeneity plays essential roles. 
\end{abstract}

\begin{keyword}
lipid bilayer membrane; lipid domain; geodesic curvature; surface phase separation; phase field method; variational principle;
numerical simulations 
\MSC[2014] 53B10 \sep 65D18 \sep 92C15
\end{keyword}

\end{frontmatter}


\section{Introduction}

This work is motivated by the formation of lipid rafts in lipid membranes. Lipid bilayer membranes are of utmost importance for the survival of cells. 
They separate the interior of cells from the extracellular environment
and compartmentalize subcellular organelles so suitable micro-environment can be maintained in the enclosed domains for various vital biochemical 
and biophysical reactions. They are material basis for morphological changes such as budding, tubulation, fission and fussion that occur during cell 
division, biological reproduction, and intracellular membrane trafficking. They also provide a physical platform to store and transduce energy as
electrochemical gradients, to segregate or disperse particular membrane proteins, and to act as messengers in signal transduction and molecular 
recognition processes \cite{MeerG2008a}. While most of these functionalities depends on the fluidity of the lipids and thereby the free diffusion 
of lipids and proteins in the bilayer, accumulated evidences show that lipids and proteins on bilayer membranes segregates into discrete domains
of distinct composition and various sizes \cite{KarnovskyM1982a,SimonK2000a,VeatchS2003a,KusumiA2005a,VieiraF2010a,WangT2012a}. 
The domain boundaries can appear as 
the barriers of free lateral diffusion of lipids and proteins, as the measured lateral diffusion coefficients of lipids and proteins {\it in vivo}
are less than the measured coefficients in artificial pure bilayer by more than one order of magnitude \cite{KusumiA2005a,TrimbleW2015a}. Inside the 
domains and on the domain boundaries particular proteins may aggregate, to cause various membrane curvature as the consequence of the modification of local
membrane composition \cite{HuttnerW2001a,RennerL2011a,RyuY2015a} or to complete specific signal transduction \cite{HinderliterA2001a,GorfeA2010a,YeungT2008a,ZhdanovV2015a}.
Some of these domains are transient, with a duration ranging from seconds to minutes, some can persist for the entire life of the cell, 
and the domains themselves can diffuse on the membrane surface as well \cite{CicutaP2007a}. The composition, location, size, configuration, duration 
of these domains and the dynamics of these characteristics are of functional and structural significance to the associated biological processes. Efforts 
integrating direct microscopic measurements, biophysical modeling, and computational simulations have been invested to elucidate the underlying physics 
of the dynamics of these lipid domains and predict their biological consequences \cite{PikeL2003a,NicolauD2006a,HancockJ2006a}. Before introducing our 
approach based on the geodesic curvature energy of the lipid domain boundaries we first review four most representative theoretical studies on the 
dynamics of lipid domains.

Lipid domains may appear as a result of lipid phase separation caused by distinct spontaneous curvatures. When bilayer membranes have multiple 
lipid species of distinct spontaneous curvatures, individual lipid species may be localized to regions where the local mean curvatures best approximate 
the corresponding spontaneous curvatures of the 
residing lipid species \cite{BaumgartT2003a}. Wang and Du formalized this reasoning by summing up the classical Canham-Evans-Helfrich 
energy \cite{CanhamP1970,EvansE1974,HelfrichW1973} for each individual lipid species and the line tension energy to generate a multi-component lipid 
membrane model \cite{WangX2008a}. By representing the membrane bending energy using the phase field formulation, they have
obtained rich patterns of membrane morphology and the generation of lipid membrane domains of different mean curvatures, where lipid 
species of the approximate spontaneous curvatures are concentrated. This model was also extended to simulate the open membrane thanks to 
the line tension energy, and the closing of membrane pores was simulated corresponding to the vanishing linear tension energy. These 
permanent domains have sizes that are determined by 
local mean curvatures of the membrane necks or bumps. These sizes in general do not match the measured sizes of mobile 
lipid rafts \cite{YuanC2002a,PikeL2003a}.

The classical phase separation model based on the Ginzburg-Landau (GL) free energy could also be directly applied on a 
membrane surface to generate surface phase separation, and the results can be related to the lipid domains. 
A surface Cahn-Hilliard equation can be derived for the gradient flow of the GL free energy, and the numerical simulations will 
produce large separated domains as a result of the coarsening dynamics \cite{DuQ2011a}. In order to generate small domains at 
spatial and temporal scales comparable
to experimental results, Camley and Brown couples the GL free energy for quasi two-dimensional binary lipids 
mixtures to the random hydrodynamics and thermal fluctuations \cite{CamleyB2010a,CamleyB2011a}. The random in-plane velocity field of the 
membrane is given by Saffman-Delbruck hydrodynamic model \cite{SaffmanP1975a}. This velocity field is added to the 
Cahn-Hilliard equation for the gradient flow of the GL free energy to produce an advection-diffusion equation, above which 
a Gaussian white noise is added, modeling the thermal fluctuation as a random source to the order parameter. Complete
phase separation shall occur as the result of a sequence of coarsening dynamics when the GL free energy is minimized, while 
the domain boundaries flicker as a result of random hydrodynamic and thermal perturbations, with a flickering
magnitude depending on the competing between the random perturbation and the persisting linear tension. 
Under high line tension small domains will merge to form large separated domain, but small domains under a critical size could remain 
separated for a long time during the course of coarsening if the line tension is not large enough to suppress the random perturbation, 
giving rise to lasting microdomains. Various dynamical scaling rates were summarized to related the microdomain size and the time when 
the domain size is far way from the Saffmann-Delbruck length $L_{sd}$ determined by the relative viscosity of the lipid membrane with 
respect to the surrounding fluid field. This approach has been recently extended to model multicomponent membranes with 
embedded proteins \cite{NoruzifarE2014a,CamleyB2014a}.

It is also possible to simulate lipid microdomains in the deterministic setting. Arguing that lipid rafts are microdomains of lipids compactly 
organized around embedded protein receptors, Witkowski, Backofen and Voigt proposed to supplement the classical Ginzburg-Landau free energy with Gaussian
potentials localized at specified positions where the membrane proteins are supposed to be embedded \cite{WitkowskiT2012a}. By specifying the 
center and modulation of these external potentials they were able to produce lipid microdomains of arbitrary size at arbitrary position. 
Coarsening dynamics were reproduced by solving the Cahn-Hilliard equation for the gradient flow of the total energy, and a scaling law 
was deduced for the growth of the microdomains. In contrast to the above approaches, one is not able to drive the lateral diffusion coefficient
of the microdomains as their positions are specified in the construction of the free energy. In general this model lacks 
a biophysical interpretation of the external potential and the related parameters that could justify the striking generation of 
lipid microdomains in the absence of line tension. 

In addition to the above continuum approaches, particle-based discrete methods have also been developed to simulate the lipid microdomains.
Molecular dynamics simulations, fully atomic or coarse grained, as reviewed in \cite{BagatolliL2009a,BennettW2013a}, 
have been able to generate lipid membrane domains that could interpret some experiments on complex model membranes. Two discrete methods that gives 
particular valuable insight into the structure and dynamics of lipid microdomains are dynamical triangulation Monte Carlo (DTMC) \cite{KumarP1996a} 
and dispersive particle dynamics (DPD) \cite{SchilcockJ2002a,LaradjiM2004a}. DTMC neglects the solvent hydrodynamics and approximate a bilayer 
membrane as a randomly triangulated sheet. Each vertex is described by a three-dimensional position vector, and all vertices are connected by 
flexible tethers, which flip during the course of dynamical triangulation to simulate phase segregation. DPD adopts a coarse-grained representation 
of amphiphilic lipids as connected head (H) and tail (C$_n$) beads, with an variable number of tail beads. The geometric and molecular mechanics 
representations of lipids in DPD differ from the coarse-grained molecular dynamics simulations (those based on the MARTINI force 
field \cite{LiH2013a,RisseladaH2008a}, for example) in that all beads are soft with interaction defined by effective forces that reproduce 
the hydrodynamic behavior of fluid bilayer membrane rather than the classical intermolecular interactions. DPD allows
asymmetric lipid composition in the two leaflets, where different lateral sizes of the lipid domains can be simulated. While 
DPD can simulate the membrane properties to length and time scales that are unattainable by MD and coarse-grained MD simulations, 
there still exist gaps for these discrete methods to model large systems containing millions of amphiphiles over biologically 
relevant time scales, for the handling of which the continuum models have intrinsic advantages.

Here we work to develop a lipid micro-organization model by recognizing the intrinsic geodesic curvatures of the boundaries 
between lipids of different species (saturated and unsaturated, ordered and disordered lipids, too). Experimental observations and molecular 
dynamics simulations show that lipid microdomains have relatively stable size ranging from below 10nm to larger than 200 nm, 
depending on the lipid composition and temperature \cite{HancockJ2006a,RisseladaH2008a,BagatolliL2009a}. Modulo the perturbations to the 
boundaries of these microdomains caused by random forces due to the thermal fluctuations, hydrodynamic interaction etc, the 
size of an individual microdomain can be characterized by the curvature of boundary, which is the geodesic curvature since the 
microdomain is a patch on the three-dimensional surface. Consequently, one can identify an {\it intrinsic geodesic curvature} for a 
binary mixture of lipid species, similar to the identification of the spontaneous curvature of an individual lipid 
species. A geodesic curvature energy can be defined for a given binary mixture of lipid species on a membrane surface, in a 
way similar to the definition of Canham-Evans-Helfrich 
curvature energy. It is expected that the minimization of this geodesic curvature energy will generate the optimal interfaces of the
binary mixture of lipids. The minimization is conducted on the 2-manifold that represents the membrane surface, and these optimal 
interfaces define the boundaries of the lipid microdomains. Since the optimal interfaces are unknown {\it a priori}, and there
will be topological changes as the initial interfaces evolve during the course of curvature energy minimization, we adopt an Eulerian 
formulation of the geodesic curvature energy by relating the geodesic curvature to a phase field function on the membrane surface. We shall 
show that the phase field representation of the geodesic curvature energy can be regarded as a generalization of the phase field
modeling of the elastic bending curvature energy of bilayer membrane deformation \cite{DuQ2005d,DuQ2006a}. The equivalence between
the geodesic curvature energy and the phase field representation can be justified when the phase field function is given by
a properly scaled hyperbolic tangent function of the signed geodesic distance to the boundaries of lipid microdomains. The 
minimizer of the geodesic curvature energy in the Eulerian representation is obtained by evolving the gradient flow of the
phase field function. The forth-order evolutionary partial differential equation for the gradient flow are numerically solved 
by using a $C_0$ interior penalty discontinuous Galerkin method on the triangulated membrane surface. Computational simulations
using the geodesic curvature energy model on different surfaces demonstrate that our model can produce microdomains with specified
intrinsic geodesic curvatures.

The rest of the article is organized as follows. In Section \ref{sect:model} a brief review of the structure of lipid bilayer
membrane is followed by the introduction of geodesic curvature of the interfaces between saturated and unsaturated phases of
lipids. Geodesic curvature energy is then defined in Lagrangian and Eulerian formulations, respectively. Technical details of 
the justification of some critical properties of geodesic distance and geodesic curvature are given in Appendix \ref{sect:append}. 
The gradient flow of the geodesic curvature energy is derived using the energetical variational principle in Section \ref{sect:numeric}. 
We split the linear and nonlinear components of the governing equation for the gradient flow so that the linear components can 
be solved using an implicit time marching method and the convergence at each time step is obtained by efficient iterations over the 
nonlinear components. The proposed mathematical model and the related numerical methods are applied in Section \ref{sect:simulations} 
on a set of static surfaces with varying curvatures to examine their ability in simulating the generation of microdomains of 
specified geodesic curvatures. We summarize this work and outline its perspectives in the final Section \ref{sect:summary}.

\section{Mathematical Model of Lipid Mismatch and Geodesic Curvature Energy} \label{sect:model}

We consider a bilayer membrane with binary mixture of lipids. These can be two species of lipids, or lipids in saturated 
and unsaturated states, or lipids in ordered or disordered states. These two species of lipids have their own regular 
molecular geometry at the equilibrium state for a given temperature, modulo random thermal and hydrodynamic fluctuations. 
By regular we mean the uniform lateral spacing of lipid molecules, the ordered match of hydrophobic tails of two leaflets,
among other features. This regular molecular geometry dictates the intermolecular interactions among lipids of same species.
At the interface between domains of distinct lipid species, the regular molecular geometry of either species has to be
relaxed in a way such that the intermolecular interactions in the transitional region near the interface will fit
the different molecular geometry of the other species. This relaxation shall create curved interface between two species
in a manner similar to the creation of the surface tension at a fluid surface. Fig. \ref{fig:surface_pattern_model} illustrates the
mismatch of lipid molecular structure at the interface between saturated (s) and unsaturated (u) lipids. 
\begin{figure}[!ht]
\begin{center}
\includegraphics[height=2.8cm]{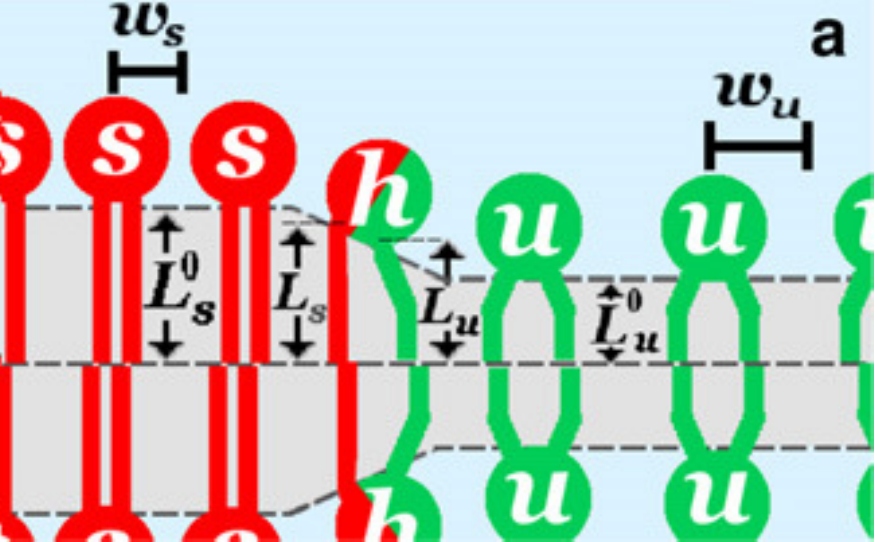} \hspace{3mm}
\includegraphics[height=2.8cm]{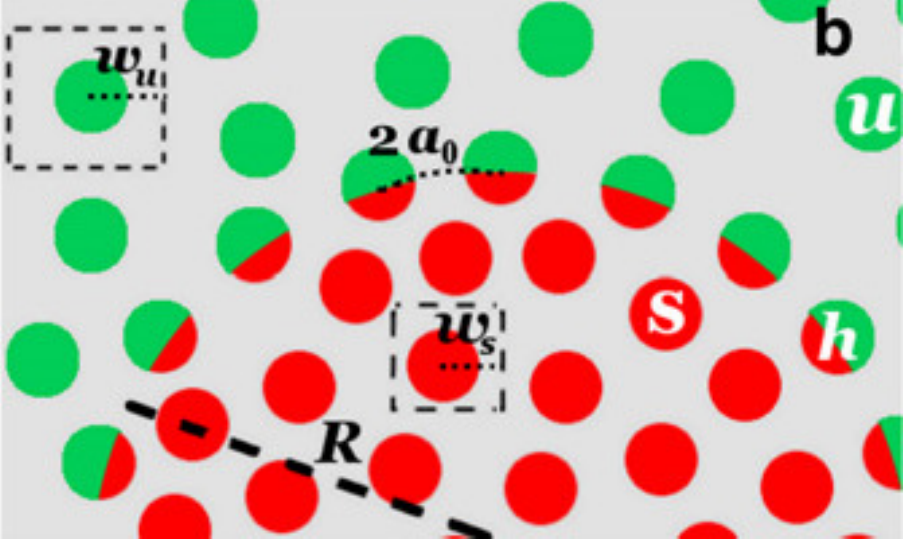} \hspace{3mm}
\includegraphics[height=2.8cm]{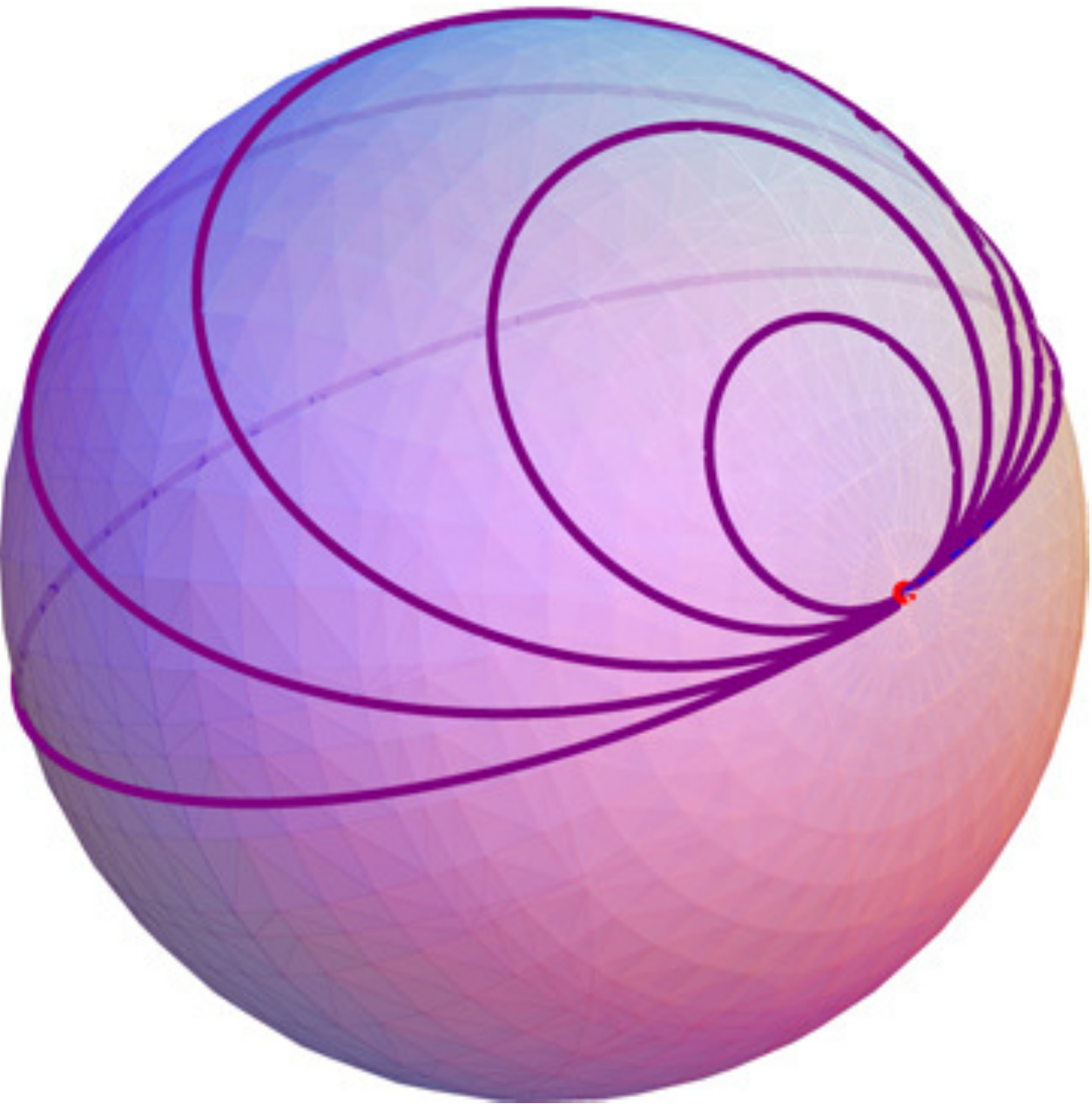}
\end{center}
\caption{Left: Mismatch of the lipid structures at the interface between two lipid domains \cite{BrewsterR2010a}. 
Used with permission. Middle: Within the transitional hybrid layer the otherwise regular lattices of the lipids 
in either domain relax to match each other, causing a bending interface \cite{BrewsterR2010a}. 
Used with permission. Right: Circles on a sphere have constant geodesic curvatures. The great circle has a vanishing 
geodesic curvature in particular.}
\label{fig:surface_pattern_model}
\end{figure}

\subsection{Geodesic curvature energy}
Because the lipid domains are to be modelled as patches on membrane surface, the domain boundaies will be curves on a
two-dimensional surface embedded in $\mathbb{R}^3$. Notice that if the interface is a geodesic, then it is a locally
straight line that does not curve to either domain it separates. How far an interface is from being a geodesic
is characterized by the geodesic curvature, which by definition is the curvature of the interface projected onto the 
tangent plane of the surface. The geodesic curvature will be an intrinsic property of {\it a binary lipid mixture}, similar
to the spontaneous mean curvature being an intrinsic property of {\it an individual lipid species}. We are then motivated to
define the curvature energy of the lipid domain boundary to be 
\begin{equation} \label{eqn:geocurv_eng}
G = \int_{C} k(H - H_0)^2 ds,
\end{equation}
where $C$ is the one-dimensional domain boundary contour on two-dimensional surface embedded in 
$\mathbb{R}^3$, $H$ is the geodesic curvature of the interface,
$H_0$ is the spontaneous geodesic curvature of the lipid mixture to be separated, and $k$ is the geodesic curvature
energy coefficient. This form of curvature energy is identical to the classical Canham-Evans-Helfrich energy in which
the integration in over the two-dimensional membrane surface and the mean curvature is adopted. The spontaneous geodesic 
curvature quantifies the geometry and molecular mismatch at the lipid interface. In the transitional region near the
interface two species of lipids coexist in a hybrid state, and a free energy for the hybrid packing of the lipids
(denoted by the subscripts 1 and 2 below) was proposed in recent theoretical studies \cite{BrewsterR2009a,BrewsterR2010a}:
\begin{equation} \label{eqn:bending_energy}
\mathcal{F} = k_1(L_1 - L_1^0)^2 + k_2(L_2 - L_2^0) ^2 + \gamma (L_1 - L_2)^2, 
\end{equation}
where $L_i$ is the length of the lipid chains in the transitional region and $L_i^0$ is the chain length in 
the equilibrium bulk. Coefficients $k_1,k_2$ are the free energetic costs of mismatch between two species and
their respective hybrids at the interface, and $\gamma$ is the energetic cost of mismatch hybrid chains of
different species. We choose $k_1 = k_2 = k$ for the purpose of simplification, but the generalization to
distinct $k$'s can be made easily. The domain curvature can be related to the lipid geometrical properties by the following
relation \cite{BrewsterR2010a}:
\begin{equation} \label{eqn:Vi}
V_i = L_i a_0 w_i \left(1 \pm \frac{w_i H}{2} \right), \quad i = 1, 2,
\end{equation}
where $V_i$ is the volume of the lipid chains, $w_i$ is the length that characterizes the molecular spacing of the lipid
head groups, and $a_0 = (w_1 + w_2)/2$ is the head group spacing in the hybrid region near the interface. Here the 
subtraction sign is chosen if the species is included in the microdomain, otherwise the addition sign shall be chosen.
The chain length in the equilibrium bulk state, $L_i^0$, can be computed by dividing the molecular volume using the 
head group area in the bulk state
\begin{equation} \label{eqn:Li0}
L_i^0 = \frac{V_i}{w_i^2}.
\end{equation}
Eqs.(\ref{eqn:Vi}-\ref{eqn:Li0}) allow us to represent the bending energy $\mathcal{F}$ in Eq.(\ref{eqn:bending_energy}) as
a quadratic function of the geodesic curvature $H$. The minimizer $H_0$ can be analytically calculated, and can be approximated
to the linear order of $V_d = V_1 - V_2$ and $w_d = w_1 - w_2$ by
\begin{equation} \label{eqn:H0}
H_0 = \frac{1}{w_T} \left [ \frac{(1 - 2 B)w_d }{(1+2 B) w_T} + \frac{2 B V_d }{(1+2 B) V_T}\right],
\end{equation}
where $V_T = (V_1 + V_2)/2, w_T = (w_1 + w_2)/2$ and $B = k/\gamma$. This approximate minimizer $H_0$ will 
appear in a more complicated form if $k_1 \ne k_2$, and calculations will show that $H_0$ is not sensitive to
the value of $B$ in its biologically relevant range \cite{BrewsterR2010a}. The minimizer $H_0$ is identified
as the {\it spontaneous geodesic curvature} of the corresponding binary lipid mixture.

\subsection{Phase field formulation of the geodesic curvature energy}

For a given binary lipid mixture in bilayer membrane, one can work to minimize the energy in Eq.(\ref{eqn:geocurv_eng}) 
to get the optimal boundary contour of surface lipid microdomains. This could be done, for example, by computing the shape
derivative following the general analytical approach outlined in \cite{DoganG2012a,MikuckiM2014a} and using this shape derivative 
as the boundary force to drive the motion of the initial boundary contour. Practical application of this approach, analytically
or numerically, might be hindered by the topological changes of the boundary contour, during which there are moments that
the surface is not smooth and its principal curvatures are not well defined. One can resort to phase field or level set 
methods with which the interface can be traced by evolving a function in a higher dimensional space. Here we adopt the
phase field approach, for which we define a phase field function $\phi$ associated with the boundary contour $C$ on the 
membrane surface $S$ as 
\begin{equation} 
\phi(x) = \tanh \left( \frac{d(x)}{\sqrt{2} \epsilon} \right), \label{eqn:phi_def}
\end{equation}
Here $d(x)$ is the signed geodesic distance from $x$ to the contour $C$, $d(x)<0$ in the interested domain enclosed by $C$,
$d(x)=0$ on $C$, and $d>0$ outside. Correspondingly, $\phi=-1$ and $1$ inside and outside, respectively, with the 
width of the transitional layer near $C$ determined by small $\epsilon>0$. This function $d(x)$ can be regarded as the 
generalization of the signed distance function used in Euclidean space $\mathbb{R}^n$ to the surface; some of the critical
critical properties of the latter can be shown true for $d(x)$. In Appendix \ref{sect:append} we provide details of relating
interface normal and geodesic curvature to this signed geodesic distance. With this definition of $\phi$ we can compute the geodesic 
curvature $H$ of its arbitrary level set, including $C$ 
where $\phi=0$. To facilitate the computation we first define 
$$ q(x) = \tanh \left(  \frac{x}{\sqrt{2} \epsilon} \right),$$
which gives 
$$ q'(x) = \frac{1}{\sqrt{2}} \left [ 1 - \tanh ^2 \left ( \frac{{ x}}{\sqrt{2}{ \epsilon}} \right ) \right ], ~
q''(x) = -\frac{1} {\epsilon} \tanh \left ( \frac{x}{\sqrt{2}\epsilon} \right ) \sech^2 \left ( \frac{x}{\sqrt{2}\epsilon} \right ).$$
Then 
\begin{eqnarray}
\nabla_{S} \phi & = & \frac{1}{\epsilon} q'(d(x)) \nabla_{S} d, \label{eqn:grad_phi}  \\
\Delta_{S} \phi & = & \nabla_S \cdot (\nabla_S \phi) = \frac{1}{\epsilon} q''(d(x)) |\nabla_{S} d |^2 + \frac{1}{\epsilon} q'(d(x)) \Delta_{S} d, \label{eqn:Lap_phi}
\end{eqnarray}
where $\nabla_S, \nabla_S \cdot $ and $\Delta_S$ are surface gradient, surface divergence, and Laplace-Beltrami operators on $S$, respectively.
The proof of the last equality (\ref{eqn:Lap_phi}) can be found in Appendix \ref{sect:append} as well, where we also proved that the tangent normal 
vector field for the level sets of $d(x)$ is $t = \nabla_S d$ and the geodesic curvature $H = \nabla_S \cdot t$. We then compute
$$ \nabla_S d = \frac{\epsilon}{q'(d(x))} \nabla_S \phi, \quad \Delta_S d = \frac{\epsilon}{q'(d(x))} \Delta_S 
\phi - \frac{q''(d(x))}{q'(d(x))} | \nabla_S d|^2,$$
It follows that the geodesic curvature is 
$$ H = \Delta_{S} d = \frac{\epsilon}{q'(d(x))} \Delta_{S} \phi - \frac{q''(d(x))}{q'(d(x))} \left |\frac{\epsilon}{q'(d(x))} \nabla_{S} \phi \right |^2.$$
Replacing $q(d(x))$ with $\phi$ and noticing that $\| \nabla_S d(x) \|=1$ (to be proved in Appendix \ref{sect:append}), we finally get the phase field 
representation of the geodesic curvature:
\begin{align}
H & = \frac{\sqrt{2} \epsilon}{1-\phi^2} \left( \Delta_{S} \phi + \frac{2\phi}{1-\phi^2} |\nabla_{S} \phi|^2 \right ) \nonumber \\
  & = \frac{\sqrt{2} \epsilon}{1-\phi^2} \left( \Delta_{S} \phi + \frac{1}{\epsilon^2} (1-\phi^2) \phi \right ). 
\end{align}
Correspondingly, we define a phase field representation of the geodesic curvature energy on the entire surface $S$, as follows:
\begin{equation}
G(\phi) = \int_{S} \frac{k \epsilon}{2} \left( \Delta_S \phi + \frac{1}{\epsilon^2} ( \phi + H_c \epsilon) ( 1- \phi^2)^2 \right)^2 ds, \label{eqn:G_phase_field}
\end{equation}
where $H_c = \sqrt{2} H_0$. For the phase field function $\phi$ defined in Eq.(\ref{eqn:phi_def}) one can show that this phase field curvature energy
formulation is equivalent to Eq.(\ref{eqn:geocurv_eng}). This nature is analogous to the equivalence between the Canham-Helfrich-Evans curvature 
energy and the phase field representation of the membrane elastic energy \cite{DuQ2005d}. The proof of the equivalence proceeds rather similarly,
and is omitted here. 

Our geodesic curvature energy model of surface microdomain formation can be regarded as a generalization of the mean-curvature driven membrane bending 
model from three-dimensional Euclidean space to two-dimensional surface. And the other way around, the mean-curvature driven membrane bending model 
can be viewed as as a special case of the curvature energy driven model on manifolds to Euclidean spaces. We expect our phase field based geodesic
curvature model can be applied to other physical systems that occur on manifolds where the formation and the topological change of surface patterns are 
driven by appropriate curvatures. These may include the change of spacetime topology curved by matter and energy in general 
relativity \cite{BaumgarteT1998a}.

\section{Gradient Flow of Geodesic Curvature Energy and its Numerical Treatments} \label{sect:numeric}

We follow the gradient flow of the curvature energy (\ref{eqn:G_phase_field}) to find its minimizer $\phi$ whose zero level set shall give the
the optimal lipid microdomain boundary. To derive the governing equation for the gradient flow we compute the first variation of $G$ with respect
to $\phi$:
\begin{equation}
\frac{\delta G}{\delta \phi} = k \left[ \Delta_{S} W - \frac{1}{\epsilon^2} (3 \phi^2 + 2H_c \epsilon \phi - 1) W \right], \label{eqn:var_G}
\end{equation} 
where 
$$ W = \epsilon \Delta_{S} \phi - \frac{1}{\epsilon} ( \phi + H_c \epsilon)(\phi^2 - 1).$$
The full expansion of this variation reads
\begin{eqnarray}
\frac{\delta G}{\delta \phi}  
    & = & 
	k \epsilon \Delta_{S}^2 \phi 
	+ \frac{k}{\epsilon}\left(2- 6 \phi^2 - 4 k H_c \epsilon \right) \Delta_{S} \phi 
	-   \left( \frac{6k}{\epsilon} \phi+ 2k H_c \right) | \nabla_{S} \phi|^2
	\nonumber \\ 
       & + & k \left( - \frac{2H_c^2 }{\epsilon}+ \frac{1}{\epsilon^3} \right) \phi 
	- \frac{ 3 k H_c}{\epsilon^2} \phi^2 
	- k\left( \frac{4}{\epsilon^3} - \frac{2  H_c^2}{\epsilon}\right) \phi^3 
	+ \frac{5kH_c}{\epsilon^2} \phi^4  
	+   \frac{3k}{\epsilon^3} \phi^5 
	\nonumber \\
	& +& \frac{k H_c}{\epsilon^2}.
\end{eqnarray}
During the energy minimization the amounts of each species of lipids shall be conserved; for this reason we consider the following constraint
\begin{equation}
A(\phi) = \int_{S} \phi(x)~ ds = \mbox{constant},
\end{equation}
whose first variation with respect to $\phi$ is
\begin{equation}
\frac{\delta A}{\delta \phi} = 1.
\end{equation}
We then introduce the following governing equation for the gradient flow of $\phi$   
\begin{equation} \label{eqn:gradient_flow_surface_pattern}
\frac{\partial \phi}{\partial t} =- \gamma \frac{\delta G}{\delta \phi} + \lambda \frac{\delta A}{\delta \phi},
\end{equation}
where $t$ is the pseudo-time, $\gamma$ is the diffusion coeffcient, and $\lambda$ is a Lagrangian multiplier used to ensure the conservation of $\phi$. 
The representation of $\lambda$ can be derived by integrating Eq. (\ref{eqn:gradient_flow_surface_pattern}) and noting 
that $\dd{\int_S \frac{\partial \phi}{\partial t} ~ ds = 0}$, thus
$$ 0 = -\int_S \frac{\delta G}{\delta \phi} ~ ds + \int_S \lambda ~ ds ,$$
and consequently,
$$ \lambda =  \frac{1}{|S|} \int_S \frac{\delta G}{\delta \phi} ~ ds,$$ 
which yields a forth-order nonlinear surface diffusion equation.
\begin{equation} \label{eqn:gradient_flow_surface_pattern_2}
\frac{\p \phi}{\p t} =- \gamma \frac{\delta G}{\delta \phi} + \frac{1}{|S|} \int_S \frac{\delta G}{\delta \phi} ~ds = -\gamma g + \lambda.
\end{equation}
Alternatively, one could derive a Cahn-Hilliard equation for the surface phase field function $\phi$ as
\begin{equation} \label{eqn:surface_CHE}
\frac{\p \phi}{\p t} = \gamma \Delta_S \left( \frac{\delta G}{\delta \phi} \right),
\end{equation}
which guarantees the conservation of $\phi$ and thus does not need a Lagrangian multiplier. However, it involves a sixth order 
surface derivative and thus is more complicated when the equation is to be solved numerically on a discretized surface $S$. 

To begin the time discretization of Eq.(\ref{eqn:gradient_flow_surface_pattern_2}) we first average the function $g$ over the 
current and the next steps $t_n$ and $t_{n+1}$, upon which the Crank-Nicolson scheme is applied:
\begin{equation} \label{eqn:scheme_1}
\frac{\phi_{n+1} - \phi_n }{\Delta t} + \gamma g (\phi_{n}, \phi_{n+1}) - \frac{1}{2} \big ( \lambda(\phi_{n+1}) + \lambda(\phi_n) \big) = 0,
\end{equation}
where the averaged function is defined by
\begin{align}
g(\phi_n, \phi_{n+1}) = & \frac{k}{2} \Delta_S (f(\phi_{n+1}) + f(\phi_n)) - \nonumber \\
                        & \frac{k}{2\epsilon^2} \left ( \phi_{n+1}^2 + \phi_{n} \phi_n + \phi_n^2 + H_c \epsilon(\phi_n + \phi_{n+1}) -1 \right )
  \left ( f(\phi_n) + f(\phi_{n+1}) \right ), 
\end{align}
with
\begin{equation}
f(\phi) = k \left (  \epsilon \Delta_S \phi - \frac{1}{\epsilon} (\phi + H_c \epsilon) (\phi^2 -1 ) \right ).
\end{equation}

To achieve convergence of this nonlinear implicit scheme at each time step, we define an inner iteration for computing $\phi_{n+1}$. The
solution during these inner iterations are denoted by $\psi_{m}$, which is expected to converges to $\phi_{n+1}$ as $m \rightarrow \8$.
We then replace $\phi_{n+1}$ in Eq.(\ref{eqn:scheme_1}) with $\psi_m$ and $\psi_{m+1}$ to get
\begin{equation} \label{eqn:scheme_2}
\frac{\psi_{m+1} - \phi_n }{\Delta t} + \gamma g (\phi_{n}, \psi_{m}, \psi_{m+1}) - \frac{1}{2} \big ( \lambda(\psi_{m}) + \lambda(\phi_n) \big) = 0,
\end{equation}
where the new average function is defined by
\begin{align}
g(\phi_n, \psi_n, \psi_{n+1}) = & \frac{k}{2} \Delta_S \tilde{f}(\phi_n, \psi_m, \psi_{m+1})  - \nonumber \\
                        & \frac{k}{2\epsilon^2} \left ( \psi_{m}^2 + \psi_{m} \phi_n + \phi_n^2 + H_c \epsilon(\psi_m + \phi_{n}) -1 \right )
  \left ( f(\psi_m) + f(\phi_{n}) \right ), 
\end{align}
with
\begin{equation}
\tilde{f}(\phi_n, \psi_m, \psi_{m+1}) = \frac{\epsilon}{2} \Delta_S (\psi_{m+1} + \phi_n) - \frac{1}{4 \epsilon} (\psi_m^2 + \phi_n^2 -2) (\psi_m + 
\phi_n + 2 H_c \epsilon).
\end{equation}

To efficiently solve the nonlinear Eq.(\ref{eqn:scheme_2}) we split the average functions into linear and nonlinear components, with the linear
component being the function of $\psi_{m+1}$ only and the nonlinear component being the function of $\phi_n$ and $\psi_m$, as follows:
\begin{align}
\tilde{f}_{lin}  = & \frac{\epsilon}{2} \Delta_S \psi_{m+1}, \\
\tilde{f}_{nlin} = & \frac{\epsilon}{2} \Delta_S \phi_{n} - \frac{1}{4 \epsilon} (\psi_m^2 + \phi_n^2 + 2H_c \epsilon) \\
g_{lin}  = & k \Delta_S \tilde{f}_{lin} (\psi_{m+1}), \\
g_{nlin}  = & k \Delta_S \tilde{f}_{nlin} (\psi_{m},\phi_n) - \nonumber \\
         & \frac{k}{2 \epsilon^2} \left( \psi_m^2 + \psi_m \phi_n + \phi_n^2 + H_c \epsilon (\psi_m + \phi_n) -1 \right)
(f(\psi_m) + f(\phi_n)).
\end{align}
The nonlinear implicit scheme is now given by
\begin{equation} \label{eqn:scheme_3}
\frac{\psi_{m+1} - \phi_n }{\Delta t} + \gamma ( g_{lin} + g_{nlin}) - \frac{1}{2} \big ( \lambda(\psi_{m}) + \lambda(\phi_n) \big) = 0,
\end{equation} 
which, after all terms of $\psi_{m+1}$ collected to the left-hand side, turns out to be
\begin{align} \label{eqn:scheme_4}
\left ( 1 + \frac{\gamma \Delta t k \epsilon}{2} \Delta_S^2 \right ) \psi_{m+1} = & \left [ \frac{1}{2} (\lambda(\psi_{m}) + 
\lambda(\phi_n)) - \gamma g_{nlin} \right ] \Delta t + \phi_n \nonumber \\
 = & \frac{1}{2} \big ( \lambda(\psi_{m}) + \lambda(\phi_n) \big) \Delta t - \frac{\Delta \gamma k \epsilon}{2} \Delta_S^2 \phi_n + \nonumber \\
 & \Delta t \gamma \Delta_S \left( \frac{k}{4 \epsilon} (\psi_m^2 + \phi_n^2 -2)(\psi_m + \phi_n + 2 H_c \epsilon) \right ) + \nonumber \\
 & \Delta t \gamma \Delta_S \left( \frac{k}{2\epsilon^2} (\psi_m^2 + \psi_m \phi_n + \phi_n^2 + H_c \epsilon(\psi_m + \phi_n) \right ) \cdot \nonumber \\
 & \left( f(\psi_m) + f(\phi_n) \right)  + \phi_n.
\end{align} 
This last equation is iterated over the inner index $m$ until the convergence is reached as signified by $\| \psi_{m+1} - \phi_{m} \| \le \beta$ 
for some desired tolerance. $\psi_m$ is initialized as $\phi_n$ for inner iterations at each time step. The convergent $\phi_m$ will be 
passed to $\phi_{n+1}$ to advance the computation to the next time step. 

We adopt a $C^0$ interior penalty discontinuous Galerkin (DG) method to solve the forth-order linear equation (\ref{eqn:scheme_4}) for $\psi_{m+1}$.
This method was initially developed for solve forth-order equations in two-dimensional Euclidean space \cite{BrennerS2012a}. When it is applied to 
solve equations on triangulated two-dimensional surface, an additional term arises on all edges, representing the mismatch of fluxes due to the
different outer normal directions of two neighbouring triangles at the shared edge. The order of convergence of the surface DG method with respect 
to the mesh refinement matches that observed on the original two-dimensional numerical experiments. A large time step is allowed in the computational
simulated presented below owning to the implicit time discretization. The inner iterations in general converge within five steps. The solutions of
a surface Cahn-Hilliard equation in \cite{DuQ2011a} is based on the reduction to two second-order equations, one for the order parameter and the other
for the chemical potential. The equations are then solved using standard surface finite element method along with an explicit time discretization, 
the latter suffers from a strict discrete energy stability condition and only very small time increments are allowed. An even larger
time step might be chosen by applying a recently developed exponential time differencing method \cite{WangX2016a}. With this method one needs to 
deliberately estimate the nonlinear component of $g(x)$ at every step so its slowly varying part can be combined with the linear component 
of $g(x)$, possibly further improving the discrete energy stability of the scheme.

\section{Computational Simulations} \label{sect:simulations}
In this section we shall apply the geodesic curvature energy variational model to simulate the lipid microdomain formation on 
three surfaces with different intrinsic geodesic curvatures. The dynamics of the microdomain formation in each set of simulations 
will be compared to those generated by the surface Cahn-Hilliard equation \cite{DuQ2011a}, the latter will produce large
surface domain as a result of minimizing the arc length of the domain boundaries.

We first consider a unit sphere on which a trigular mesh with $3963$ approximately uniformly distributed nodes. We choose 
$\varepsilon = 0.1, H_c = \frac{1}{0.3}, k = 0.01$, a time increment $\Delta t = 0.001$, and a random field of $\phi$. 
The results are compared side by side with those of the classical surface Cahn-Hilliard equation in 
Fig.~\ref{fig:1atm_3963}. Using a X-means clustering method \cite{PellegD_xmeans_2000} we are able to identify a number of microdomains 
whose radii are then calculated. The radius associated with each microdomain is approximately $0.23$. This means the curvature is 
approximately $\frac{1}{0.23}$, close to the specified spontaneous geodesic curvature. 
\begin{figure}[!htb]
\begin{center}
\includegraphics[angle=270,width=3.8cm]{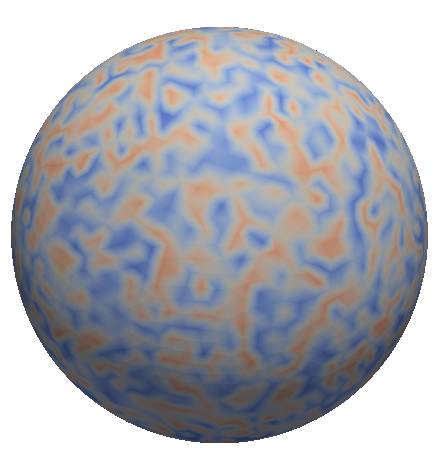}  \hspace{1.5cm} 
\includegraphics[angle=270,width=3.8cm]{1atm_3963_ini.png}  \\
\includegraphics[angle=270,width=3.8cm]{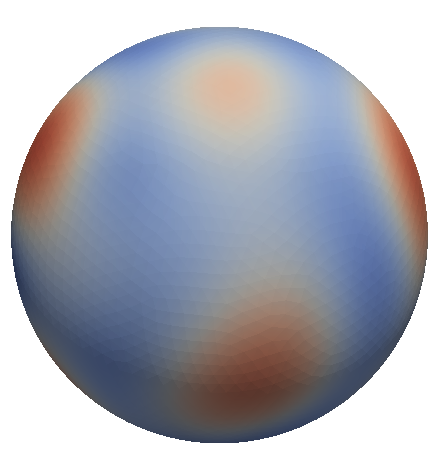} \hspace{1.5cm}
\includegraphics[angle=270,width=3.8cm]{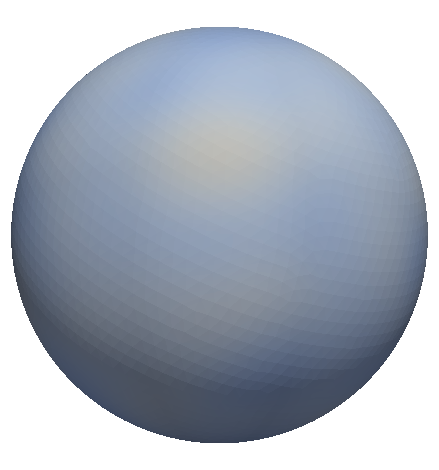}  \\
\includegraphics[angle=270,width=3.8cm]{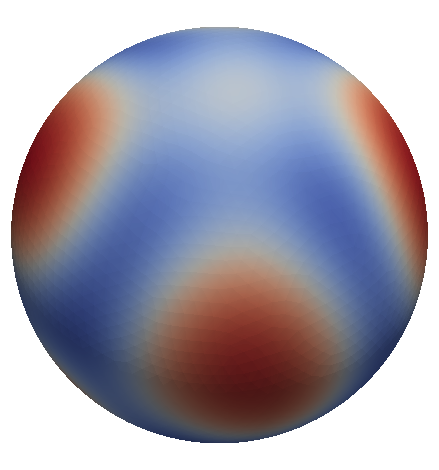} \hspace{1.5cm}
\includegraphics[angle=270,width=3.8cm]{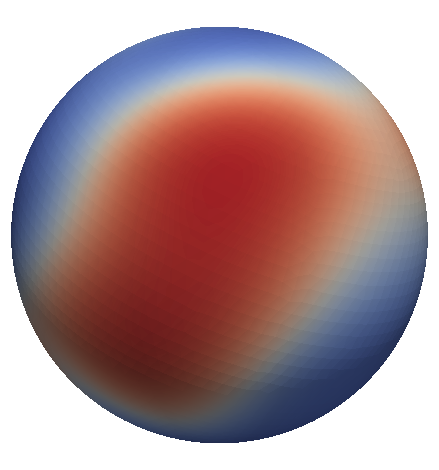}  \\
\includegraphics[angle=270,width=3.8cm]{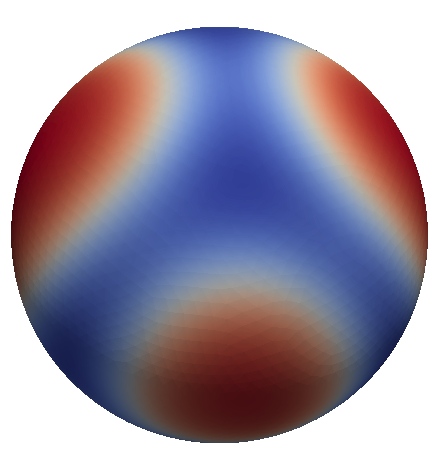} \hspace{1.5cm}
\includegraphics[angle=270,width=3.8cm]{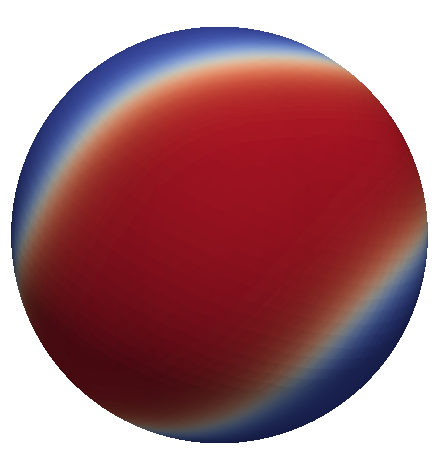}  
\end{center}
\caption{Simulations on the unit sphere with $3963$ nodes show the formation of local microdomains predicted by the geodesic curvature 
energy (left column) and domain separation predicted by the Allen-Cahn equation associated with the classical Ginzburg-Landau energy (right column) 
from the same initial random field (top row). Sampling time from top to bottom: $t=0, 1, 3$ and $8$.}
\label{fig:1atm_3963}
\end{figure}

The second set of simulations is conducted on a more complicated surface as shown in Fig. \ref{fig:3atm}. We choose this surface to be the 
molecular surface of three particles of unit radius respectively centered at $(0, 1, 0), (-0.864, -0.5, 0)$ and $(0.864, -0.5, 0)$. 
Such molecular surfaces are obtained by tracing the centers of spherical probe of the radius of the water molecule ($\approx 1.4$\AA, 
we are using dimensionless length is this work, though)
as the probe is rolling over the surfaces of the molecules \cite{MSMS}. This surface is quasi-uniformly meshed with $2974$ nodes and 
we set $\varepsilon = 0.1, H_c = \frac{1}{0.4}, k = 0.01$ and $\Delta t = 0.001$. Starting with a random initial field we finally identified six 
microdomains using the X-mean clustering method at the equilibrium state, whose radii are estimated. As seen in Fig.~\ref{fig:domain_radii}(left), 
the radii of the microdomains approximate the given spontaneous geodesic curvatures. 

\begin{figure}
\begin{center}
\includegraphics[angle=270,width=3.8cm]{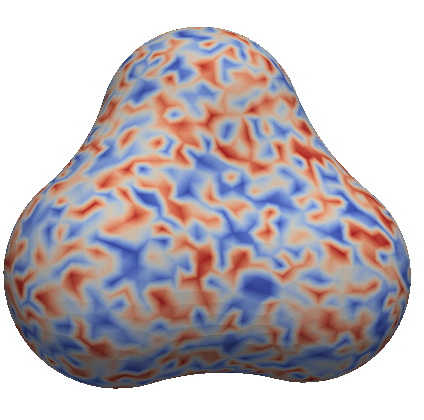}\hspace{1.5cm}
\includegraphics[angle=270,width=3.8cm]{3atm_ini.png} \\
\includegraphics[angle=270,width=3.8cm]{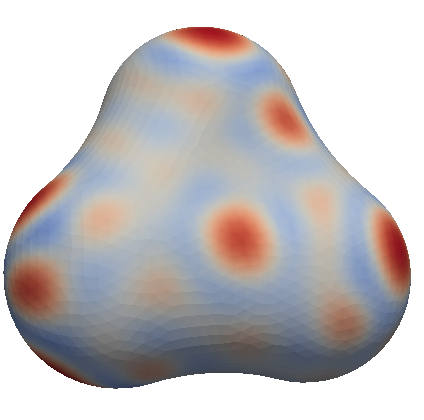}\hspace{1.5cm}
\includegraphics[angle=270,width=3.8cm]{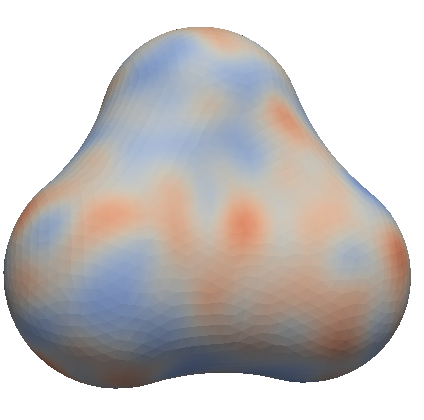} \\
\includegraphics[angle=270,width=3.8cm]{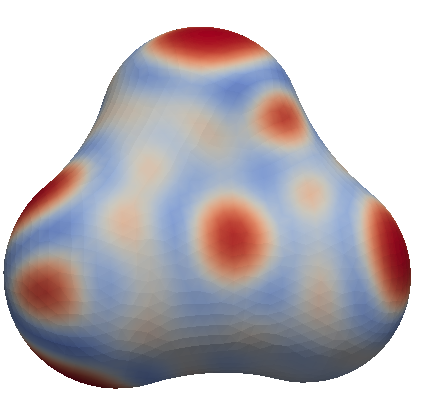}\hspace{1.5cm}
\includegraphics[angle=270,width=3.8cm]{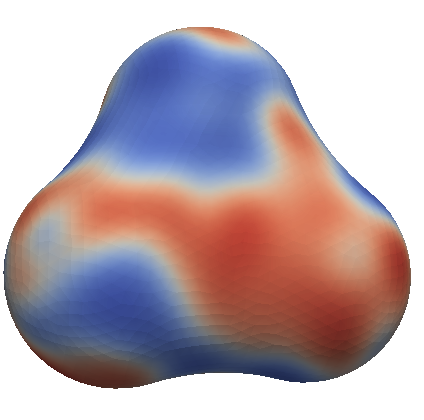} \\
\includegraphics[angle=270,width=3.8cm]{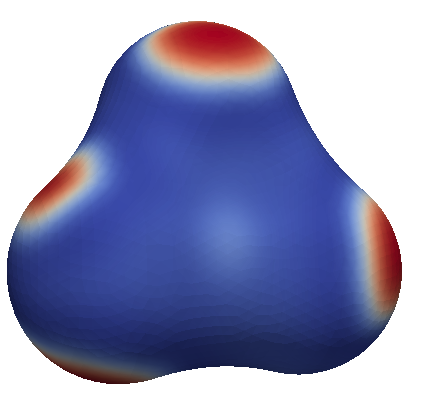}\hspace{1.5cm}
\includegraphics[angle=270,width=3.8cm]{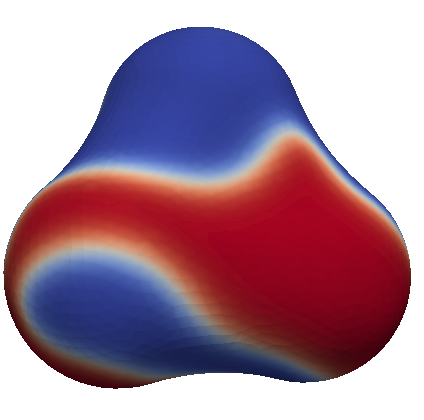} 
\end{center}
\caption{Simulations on the surface of a three-particle molecue show the formation of local microdomains predicted by the geodesic curvature 
energy (left column) and domain separation predicted by the Allen-Cahn equation associated with the classical Ginzburg-Landau energy (right column) 
from the same initial random field (top row). Sampling time from top to bottom: $t=0, 1, 3$ and $8$.}
\label{fig:3atm}
\end{figure}

\begin{figure}[!htb]
\begin{center}
\includegraphics[width=6.5cm]{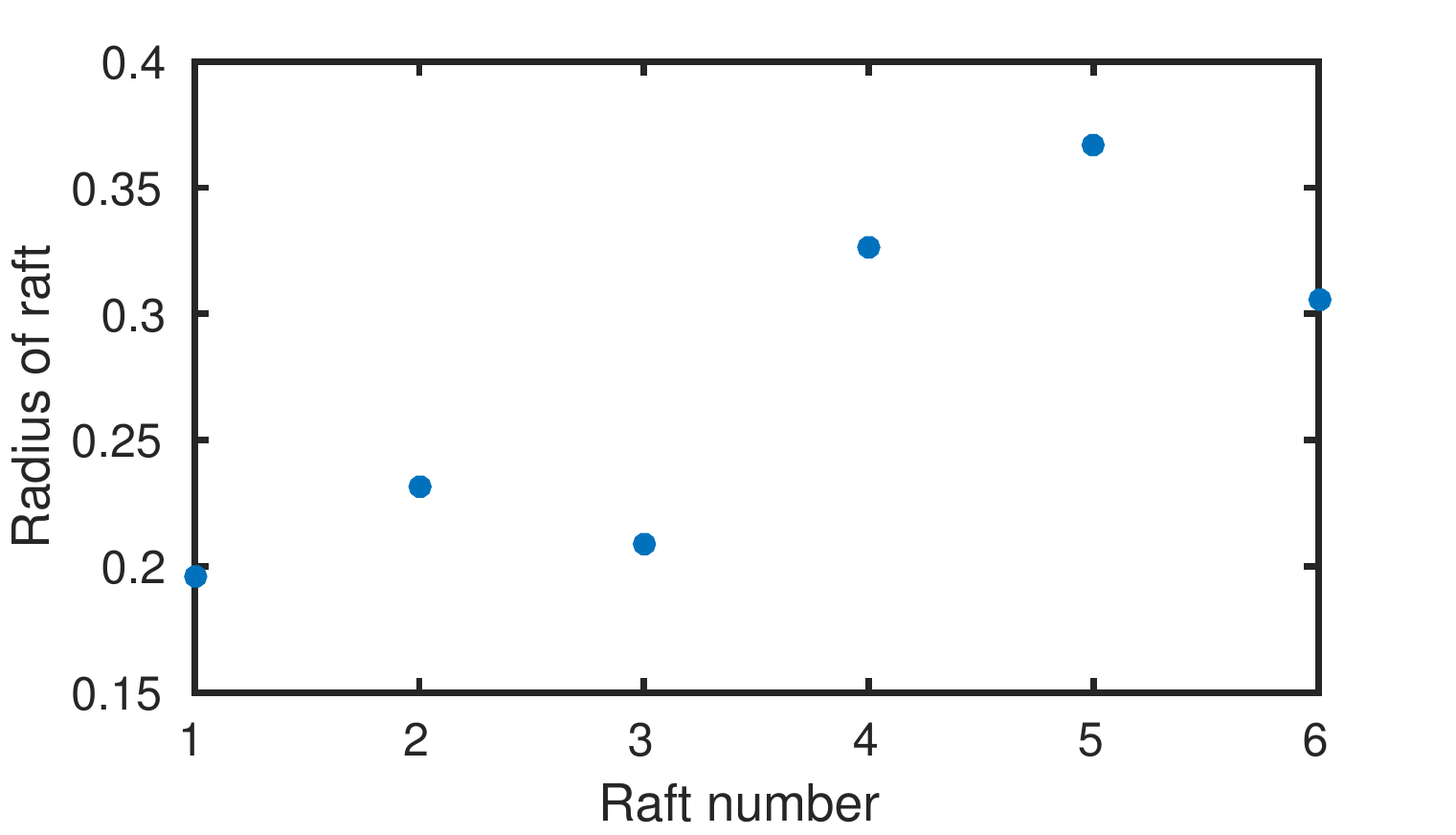}
\includegraphics[width=6.5cm]{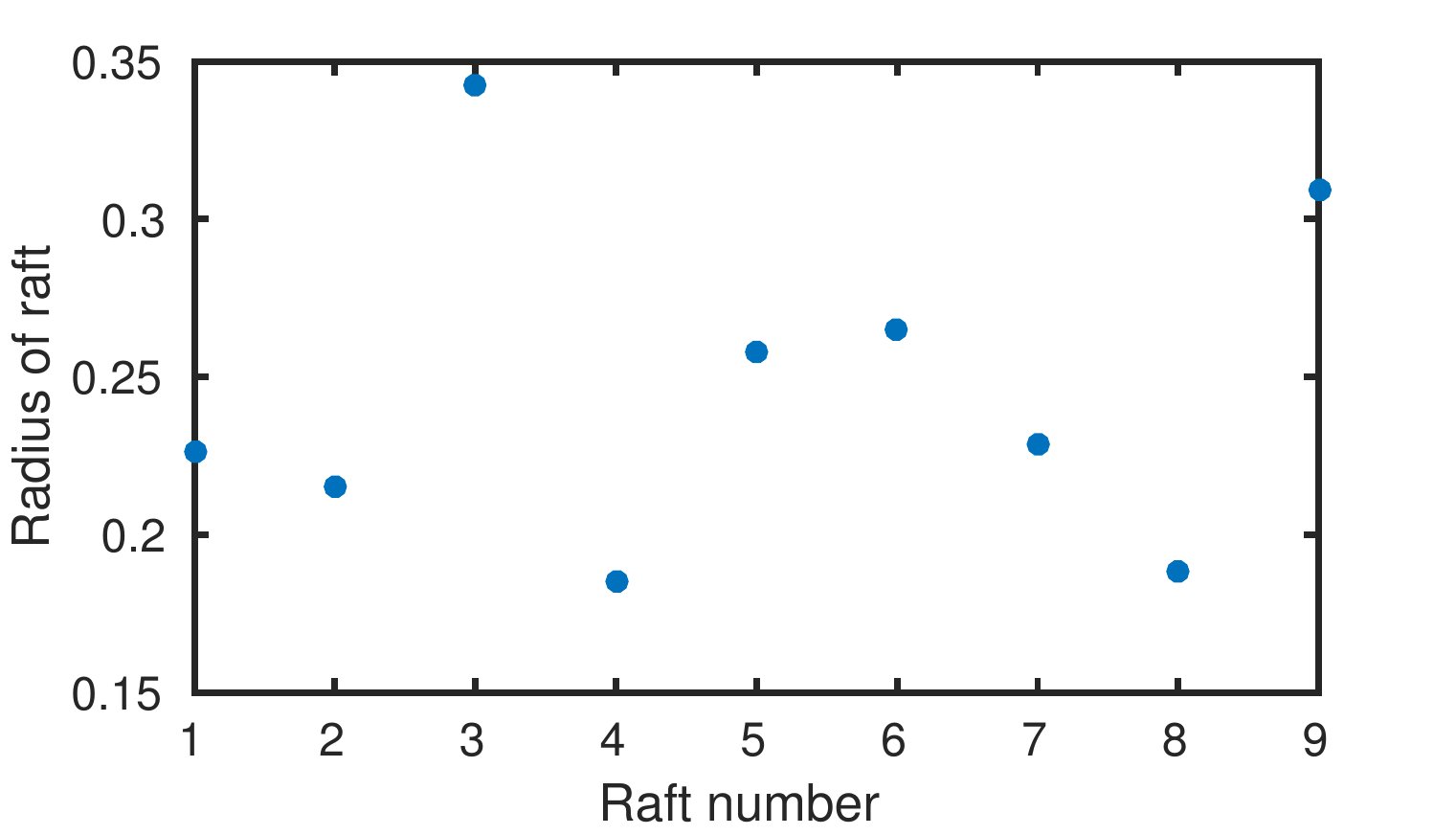}
\end{center}
\caption{The radii of the prominent 6 microdomains produced on surface of a three-atom molecule (left) and of the prominent 9 microdomains 
on the surface of a six-atom molecule.}
\label{fig:domain_radii}
\end{figure}

In the last set of simulations we choose the molecular surface of six particles of unit radius respectively centered 
at $(\pm 1,0,0),(0,\pm 1,0)$ and $(0,0,\pm)$. The quai-uniform surface mesh has $3903$ nodes and we 
choose $\epsilon = 0.1, H_c = \frac{1}{0.4}, k = 0.01$ and $\Delta t = 0.001$ for the simulation. The dynamics of the
microdomain formation is presented in Fig.~\ref{fig:6atm} along with that of the domain separation described by
the surface Cahn-Hilliard equation. One can see from Fig. \ref{fig:domain_radii}(right) that the largest raft radius obtained by the 
simulation is about $0.35$ which means the curvature of that raft is about $\frac{1}{0.35}$, a value close to given 
spontaneous geodesic curvature. 

\begin{figure}[!htb]
\begin{center}
\includegraphics[angle=270,width=3.8cm]{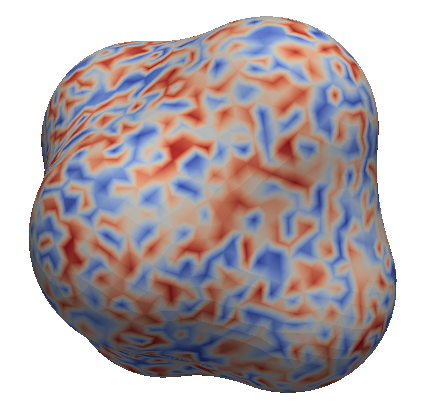} \hspace{1.5cm}
\includegraphics[angle=270,width=3.8cm]{6atm_ini.png}  \\
\includegraphics[angle=270,width=3.8cm]{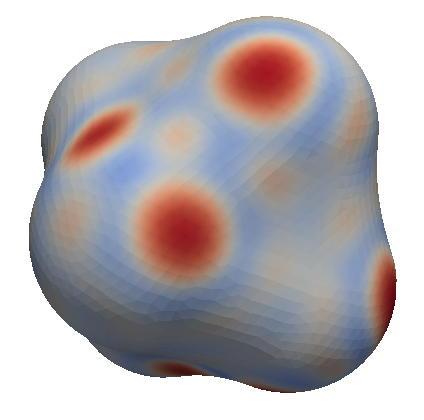} \hspace{1.5cm}
\includegraphics[angle=270,width=3.8cm]{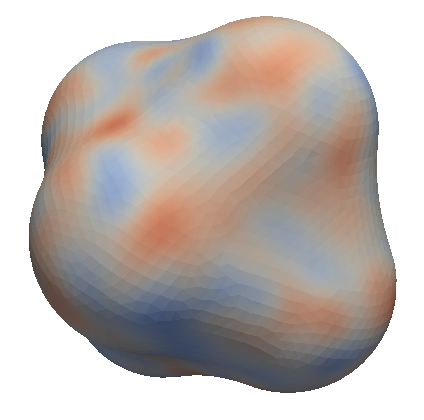}  \\
\includegraphics[angle=270,width=3.8cm]{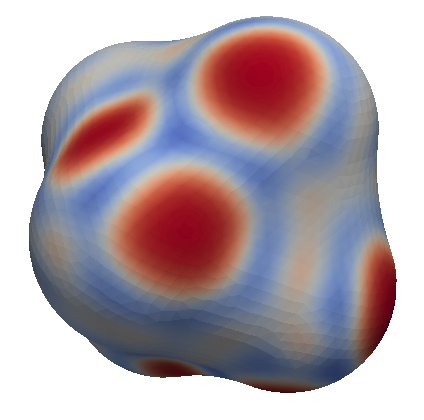} \hspace{1.5cm}
\includegraphics[angle=270,width=3.8cm]{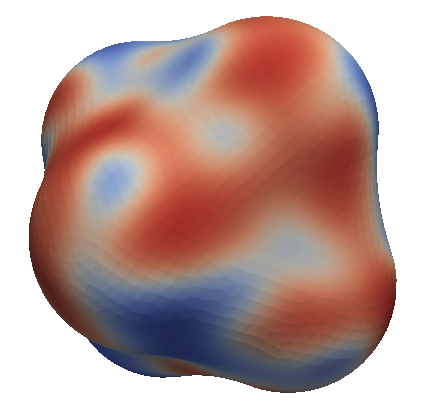}  \\
\includegraphics[angle=270,width=3.8cm]{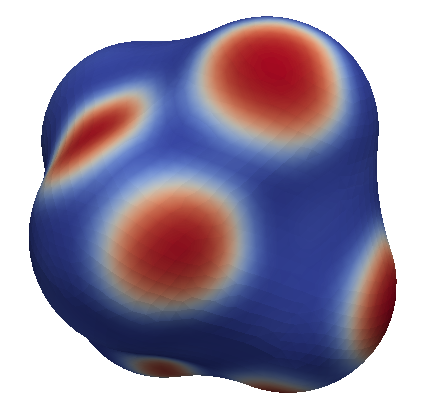} \hspace{1.5cm}
\includegraphics[angle=270,width=3.8cm]{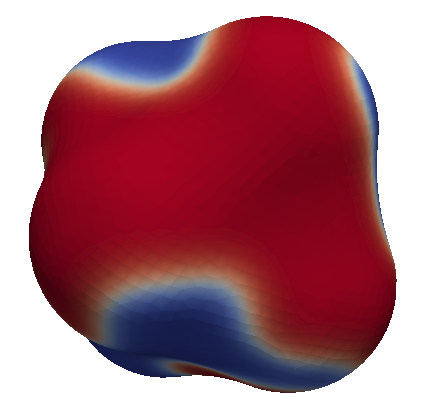}  
\end{center}
\caption{Simulations on the surface of a six-particle molecue show the formation of local microdomains predicted by the geodesic curvature 
energy (left column) and domain separation predicted by the Allen-Cahn equation associated with the classical Ginzburg-Landau energy 
(right column) from the same initial random field (top row). Sampling time from top to bottom: $t=0, 1, 3$ and $7$.}
\label{fig:6atm}
\end{figure}

The radii of the microdomains generated in our simulations are not exactly the given spontaneous geodesic curvature. Rather they 
are distributed around the given curvature. Apart from the numerical error in simulation and in X-mean clustering and radii 
estimate, this non-uniform distribution of domain radii is mostly related to the total quantity of the lipid phases 
in the initial random field. The initial quantity may not be exact to cover an integer number of microdomains with the 
given radius. However, the overall distribution of radii around the given radius of curvature demonstrates that our 
geodesic curvature model is capable of predicting the formation of microdomains that are caused by the geometrical and 
molecular mechanical mismatch of lipid mixtures. The predicted microdomains can be compared to the observed lipid rafts, 
and the boundaries of these microdomains can be identified to provide locations where specific proteins can aggregate. Coupling 
of our model of geodesic curvature driven microdomains formation to the localization of proteins will provide a very useful 
quantitative technique for studying the crucial roles of these proteins in high-fidelity signal transmission 
in cells \cite{LiH2013a,HockerH2013a}.

\section{Conclusion} \label{sect:summary}
In this work we investigated formation of microdomains on lipid membrane surfaces. We examine the 
geometrical and mechanical mismatch of lipids of different species on domain boundaries and introduce
the geodesic curvature of these boundaries in the energy characterization of the equilibrium state. 
This gives rise to the geodesic curvature energy model of the surface microdomain formation. A phase
field representation of the model is derived to ease the computational simulations of the model.
We develop a $C^0$ interior penalty surface finite element method along with an explicit/implicit time splitting scheme 
for solving the forth order nonlinear partial differential equation on surfaces. Numerical simulations on different 
surfaces demonstrate that our model and numerical techniques are able to produce the circular raft-like microdomains we are interested in. 
The generated microdomains have the radii corresponding to the given spontaneous geodesic curvature for the binary lipid mixture.
There are accumulated evidences that lipid rafts aid in vesicle budding and the deformation of the membrane. Currently many models are being 
produced to model vesicle budding and membrane deformation. This opens up a whole research area of connecting predicted microdomains using the
developed geodesic curvature model to other biophysical processes.

\section*{Acknowledgements}
This work has been partially supported by the Simons Foundation and National Institutes of Health through the grant 
R01GM117593 as part of the joint DMS/NIGMS initiative to support research at the interface of the biological and mathematical sciences.

\appendix

\section{Signed geodesic distance, normal vector, and geodesic curvature} \label{sect:append}

This section is devoted to the justification of a number of surface calculus identities used in the phase field formulation 
in this article. Although the counterparts of these identities in Euclidean spaces are well known, a rigorous derivation of these 
surface identities, to our knowledge, does not exist. We believe the justification of these identities is useful to scientists 
working on the various problems involving surface patterns and surface energies. Since the phase field function $\phi(x)$ is
an algebraic function of the signed geodesic function $d(x)$, it is sufficient to work with $d(x)$ in the derivation below.

Let $S$ be an oriented smooth surface without boundary in three-dimensional Euclidean space $\mathbb{R}^3$ and let $r(s)$ be a curve in $S$, 
parameterized by the arc length $s$. The Darboux frame of the curve $r(s)$ is defined by
$$ T(s) = r'(s), \quad u(s) = u(r(s)), \quad t(s) = u(s) \times T(s),$$
where $T(s), u(s), t(s)$ are respectively the unit tangent vector, the unit normal vector, and the unit tangent normal vector.
Let $d(x)$ be the signed geodesic function from the curve $r(s)$, following the convention that $d(x)<0$ in the 
surface region of interests enclosed by $r(s)$, zero on $r(s)$, and positive outside. One can define the similar Darboux frame on an arbitrary level set of $d(x)$ 
on $S$, giving rise to three vector fields $T(x), u(x)$ and $t(x)$ on $S$. 

We shall have 
\begin{theorem}
The tangent normal vector field $t(x)$ on $S$ is the normalized surface gradient of the signed geodesic distance function $d(x)$:
$$ t(x) = \nabla_S d(x).$$
\end{theorem}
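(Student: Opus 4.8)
The plan is to determine $\nabla_{S} d$ from three pieces of information: the plane it lies in, its direction within that plane, and its length. First I would recall the defining property of the surface gradient: for any scalar field, $\nabla_{S} d$ is the tangential part of the ambient gradient, hence it lies in the tangent plane $T_x S$ and is orthogonal to the surface unit normal $u(x)$. Thus $\nabla_{S} d$ already lives in the plane spanned by the orthonormal Darboux pair $\{ T, t \}$, and the problem reduces to showing that it coincides with $t$.

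Next I would exploit that $d$ is constant on each of its level sets. Parametrizing the level set through $x$ by arc length as a curve with unit tangent $T(x)$ and differentiating the identity $d(r(s)) = \mathrm{const}$ in $s$ gives $\nabla_{S} d \cdot T = 0$. Together with $\nabla_{S} d \cdot u = 0$ this makes $\nabla_{S} d$ orthogonal to two of the three frame vectors, hence parallel to the third: $\nabla_{S} d = \lambda\, t$ for a scalar field $\lambda$, since $t = u \times T$ spans the line in $T_x S$ orthogonal to $T$.

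It then remains to show $\lambda = 1$, which splits into a magnitude and a sign claim. If the eikonal identity $\| \nabla_{S} d \| = 1$ is taken as separately established in the Appendix, the magnitude is immediate; otherwise I would prove it by the first variation of geodesic arc length. Through a point $x$ off the cut locus there is a unique unit-speed minimizing geodesic $\gamma$ from its foot point on $C$ to $x$ realizing $d(x)$; because $\gamma$ is a geodesic the interior variation term vanishes, and because the foot point is a minimizer the geodesic meets $C$ orthogonally, so the boundary term vanishes too. The endpoint variation formula then yields $\nabla_{S} d(x) = \gamma'(d(x))$, a unit vector, whence $\lambda = \pm 1$. The sign is fixed by the orientation convention: $d$ increases outward from the region enclosed by $C$, and the Darboux frame is oriented so that $t = u \times T$ points in that same outward direction, giving $\lambda = +1$ and therefore $\nabla_{S} d = t$.

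I expect the main obstacle to be the rigorous justification of the magnitude step on a curved surface, namely the smoothness of $d$ away from the cut locus, the existence and uniqueness of the minimizing foot point, and the orthogonal incidence of the minimizing geodesic at $C$, which together license the first-variation computation $\nabla_{S} d = \gamma'$. The tangent-plane and level-set steps are essentially bookkeeping with the frame; the analytic content lives entirely in the eikonal equation, which is the surface analog of $\| \nabla d \| = 1$ for the Euclidean signed distance and must be obtained from the Gauss-lemma-type argument above rather than by direct computation.
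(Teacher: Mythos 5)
Your proposal is correct and follows essentially the same route as the paper's proof: both argue that $\nabla_S d$ is orthogonal to $T$ (by differentiating $d$ along a level set) and to $u$ (since it lies in the tangent plane), hence parallel to $t$, and then fix the magnitude and sign using the fact that $d$ increases at unit rate along the geodesic normal to the level sets. The only difference is one of rigor at the eikonal step: the paper simply asserts $\partial d/\partial\eta = 1$ because $\eta$ points along the geodesic through $x_0$, whereas you propose to justify this via the first variation of geodesic arc length and orthogonal incidence at $C$, which is a more careful treatment of the same idea.
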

\begin{proof}
Let $x_0$ be an arbitrary point on $S$ and let $r(s) = c$ be the level set of $d(x)$ passing through $x_0$ for a constant $c$, parameterized
by the arc length $s$. Associated with $r(s)$ at $x_0$ are the three vectors, $u(x_0), T(x_0), t(x_0)$ defined by the Darboux frame. Let $(\xi,\eta)$ be 
the two-dimensional orthogonal coordinate of the tangent plane of the surface $S$ originated at $x_0$, with $\xi$ aligned with $T(x_0)$. Furthermore, 
we consider a local Monge parameterization of $S$ around $x_0$, with the coordinate variables $(\chi_1, \chi_2)$. The linear map $M$ from $(\chi_1,\chi_2)$ to 
$(\xi,\eta)$ is such that
$$ \myvec{\xi;\eta}  = M \myvec{\chi_1;\chi_2}, \qquad \frac{d}{ds} \myvec{\xi;\eta}  = \nabla M \frac{d}{ds}\myvec{\chi_1;\chi_2}.$$ 
The signed geodesic distance function $d(x)$ can be given locally by either $d(\xi,\eta)$ or $d(\chi_1,\chi_2)$. 
The specific level set $r(s)=c$ can be locally given by $d(\chi_1(s),\chi_2(s))=c$ in the $(\chi_1,\chi_2)$ plane. 
It follows that 
$$ \frac{d d(\chi_1,\chi_2)}{ds} = \nabla_{(\chi_1,\chi_2)} d \cdot \frac{d}{ds} \myvec{\chi_1;\chi_2}= 0.$$
Using the mapping $M$ we will have at point $x_0$
\begin{align*}
\frac{d d(\xi,\eta)}{ds} & = \nabla_{(\xi,\eta)} d \cdot \frac{d}{ds} \myvec{\xi;\eta} \\
 & = (\nabla M)^T \nabla_{(\chi_1,\chi_2)} d \cdot (\nabla M)^{-1} \frac{d}{ds} \myvec{\chi_1;\chi_2} \\
 & = ((\nabla M)^T \nabla_{(\chi_1,\chi_2)} d )^T (\nabla M)^{-1} \frac{d}{ds} \myvec{\chi_1;\chi_2} \\
 & = \nabla_{(\chi_1,\chi_2)} d \cdot \nabla M (\nabla M)^{-1} \frac{d}{ds} \myvec{\chi_1;\chi_2} \\
 & = \nabla_{(\chi_1,\chi_2)} d \cdot \frac{d}{ds} \myvec{\chi_1;\chi_2} \\
 & = 0,
\end{align*}
suggesting that $\nabla_{(\xi,\eta)}d$ is orthogonal to the tangent vector $T(x_0)$ of $r(s) =c$ at $x_0$. Since $\nabla_{(\xi,\eta)}d$ is 
on the tangent plane it is orthogonal to surface with the normal vector $u(x_0)$, and hence it must be in the direction of the tangent normal vector
$t(x_0)$. 

Finally we check $\| \nabla_{(\xi, \eta)} d\|$. Since $r(s)=c$ is a level set and $\xi$ is aligned with the tangent direction of this level set it 
follows that $\partial d(\xi,\eta)/\partial \xi=0$. Now that $\eta$ is in the direction of the tangent norm, and given that $r(s)=c$ is a level set of 
the signed geodesic distance function, $\eta$ is indeed in the direction of the geodesic passing through $x_0$, thus $\partial d(\xi,\eta)/\partial \eta=1$.
This suggests that $\| \nabla_S d \| = \| \nabla_{\xi, \eta} d\|= 1$ and the theorem follows.
\end{proof}

After the establishment of the tangent normal vectors field $t(x)$ on $S$, we can compute the geodesic curvature of any level set of the 
signed geodesic distance function $d(x)$ by using the surface divergence of $t(x)$. Let $r(s)$ be a level set $d(x) = c$, parameterized by the arc 
length $s$.  One can associate a Frenet-Serret frame to $r(s)$, defined by 
$$ T(s) = r'(s), \quad N(s) = \frac{T'(s)}{\| T'(s) \|}, \quad B(s) = T(s) \times N(s),$$ 
where $T(s),N(s), B(s)$ are respectively the unit tangent vector, the Frenet normal vector, and the Frenet binormal vector. The Frenet normal
vector is also called the principal normal vector. The curvature $k$ of the curve $r(s)$ is 
$$ k = \| T'(s) \|.$$
The plane spanned by the orthogonal vectors $T(s),N(s)$ is called osculating plane. Let $\alpha$ be 
the angle between the osculating plane and the tangent plane. The geodesic curvature is defined to be the projection of the curvature to the 
tangent plane, i.e.,
$$ k_g = k \cos \alpha.$$
It can be shown that the geodesic curvature is indeed the ordinary curvature of the curve obtained by projecting $r(s)$ on to the 
tangent plane (P.261, \cite{Camo_DiffGeoCurvSurf_1976}). This last fact will be used to prove the following theorem:
\begin{theorem}
The geodesic curvature $k_g$ of the space curve $r(s)$ given by the level set $d(x)=c$ of the signed geodesic distance function equals 
to the surface divergence of the tangent normal vector field $t(x)$:
$$ k_g(x) = \nabla_S \cdot t(x).$$
\end{theorem}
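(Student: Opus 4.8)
The plan is to fix an arbitrary point $x_0$ on the level curve $r(s)=c$ and to evaluate the surface divergence $\nabla_S\cdot t$ there by expanding it in the orthonormal tangent frame $\{T(x_0),t(x_0)\}$. By the previous theorem I already know that $t=\nabla_S d$ is a unit vector field ($\|\nabla_S d\|=1$) orthogonal to the tangent vector $T=r'(s)$ of the level curve; hence $\{T,t\}$ is an orthonormal basis of the tangent plane at $x_0$, and the surface divergence of the tangent field $t$ may be written as
\begin{equation*}
\nabla_S\cdot t = \langle \nabla_T t, T\rangle + \langle \nabla_t t, t\rangle,
\end{equation*}
where $\nabla$ is the Levi-Civita connection of $S$, i.e. the tangential projection of the ambient derivative in $\mathbb{R}^3$ (the Gauss formula $\nabla_Y X = (D_Y X)^\top$).

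The second term vanishes at once: since $\langle t,t\rangle\equiv 1$, differentiating gives $\langle \nabla_t t, t\rangle=\tfrac12\nabla_t\langle t,t\rangle=0$. For the first term I would invoke the Darboux frame. Because $T=dr/ds$ is tangent to the level curve, the covariant derivative $\nabla_T t$ is the tangential projection of $\tfrac{d}{ds}t(r(s))=t'(s)$, and along $r(s)$ the field $t=\nabla_S d$ coincides with the Darboux tangent normal $u\times T$ precisely by the previous theorem. The Darboux transport relation then reads $t'(s)=-k_g\,T+\tau_g\,u$, with $\tau_g$ the geodesic torsion, so its tangential part is a multiple of $T$ and the Darboux equations identify that multiple as $\pm k_g$. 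Combining the two terms gives $\nabla_S\cdot t=\pm k_g$, and fixing the sign (next paragraph) yields the claimed identity $\nabla_S\cdot t=k_g$.

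The step I expect to require the most care is pinning down this \emph{sign}. The bare Darboux relation determines $k_g$ only up to the orientation choice relating the curve direction $T$, the surface normal $u$, and the tangent normal $t=u\times T$, and this must be reconciled with the convention adopted here, namely $d<0$ inside the enclosed region so that $t=\nabla_S d$ is the \emph{outward} surface normal to the level curve. I would fix the sign by anchoring to a model case, e.g. a geodesic circle of radius $\rho$ (a small circle on the sphere of Fig.~\ref{fig:surface_pattern_model}, right), for which $d$ is the geodesic distance to the center minus $\rho$, the outward field $t$ has surface divergence $+1/\rho$, and the geodesic curvature is likewise $+1/\rho$; this selects the $+$ sign and gives $\nabla_S\cdot t=k_g$. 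Equivalently, since $t=\nabla_S d$, the identity reads $k_g=\Delta_S d$, matching the expression used in Section~\ref{sect:model}.

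As an independent check I would also carry out the computation in the tangent-plane coordinates $(\xi,\eta)$ of the previous theorem, in which $\partial d/\partial\xi=0$ and $\partial d/\partial\eta=1$ at $x_0$. There $k_g$ equals the ordinary planar curvature of the projection of $r(s)$ onto the tangent plane (P.261, \cite{Camo_DiffGeoCurvSurf_1976}), and the planar curvature of a level set is the flat divergence of $\nabla d/\|\nabla d\|$. Since divergence is a first-order operator and, in geodesic normal coordinates at $x_0$, the surface metric and its first derivatives agree with the Euclidean ones, the surface divergence $\nabla_S\cdot t$ reduces at $x_0$ to this flat divergence, confirming $\nabla_S\cdot t=k_g$ and cross-validating the orientation bookkeeping above.
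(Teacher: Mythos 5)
Your argument is correct and reaches the identity by a genuinely different route than the paper. The paper works entirely in an explicit local Monge parameterization $(\xi,\eta,h(\xi,\eta))$ adapted to the point $p$: it writes the level curve as $\eta=f(\xi)$, identifies $k_g=-f''(0)$ as the ordinary curvature of the projected plane curve, computes the three Euclidean components of $t=u\times T$ as explicit rational functions of $f'$, $h_\xi$, $h_\eta$, and then differentiates term by term, using $f'(0)=0$ and $h_\xi(0,0)=0$ to show $\partial t_2/\partial\eta=0$ and $\partial t_1/\partial\xi=-f''(0)$. You instead stay intrinsic: you expand $\nabla_S\cdot t$ in the orthonormal frame $\{T,t\}$, kill the $\langle\nabla_t t,t\rangle$ term using $\|\nabla_S d\|\equiv 1$ from the previous theorem, and read off $\langle\nabla_T t,T\rangle=\mp k_g$ from the Darboux structure equations. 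Your route is shorter, avoids the paper's delicate bookkeeping of the $F_\xi/F^2$ terms, and makes transparent exactly which inputs are used ($t$ is a unit field, $t\perp T$, Darboux transport); its only cost is that the entire content of the sign is pushed into the orientation conventions relating $T$, $u$, and $t=u\times T$ to the convention $d<0$ inside, which you correctly flag and resolve by calibrating against a geodesic circle --- this is no worse than the paper, which likewise simply ``assigns'' the sign of $-f''(0)$ so as to match the orientation of the $(\xi,\eta)$ coordinates. One small caution on your secondary cross-check: the planar projection used to define $k_g$ (orthogonal projection onto the tangent plane, as in do Carmo) and the representation of the level set in geodesic normal coordinates are different maps, and one should verify that they agree to second order at $x_0$ before equating the two plane curvatures; since that check is not load-bearing for your main Darboux argument, the proof stands without it.
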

\begin{proof}
Let $(\xi,\eta)$ be the orthogonal coordinate of a local Monge parameterization of the surface $S$ near the point $p$ such that
$\xi, \eta$ are respectively in the directions of the tangent and tangent normal vectors $T(p),t(p)$ at $p$, and $p$ itself is mapped
to the origin of the $(\xi,\eta)$ plane. Let the height of the surface be locally $\zeta = h(\xi,\eta)$. Under this parameterization the space 
curve $r(s)$ will be mapped to a curve on $(\xi,\eta)$ plane, denoted by $\eta = f(\xi)$. The space curve $r(s)$ can be parameterized by $\xi$ 
only by setting
$$ r(s)  = (\xi,f(\xi), h(\xi,f(\xi))).$$
Accordingly, the tangent vector $T(x)$ can be given as a function of $\xi$ by 
$$ T(\xi) = (1, f'(\xi), h_{\xi} + h_{\eta}f'(\xi))^T.$$
Since this tangent vector is identical to the element vector $(1,0,0)^T$ at $\xi=0$ by the choice of the Monge parameterization, it follows
that at $p$
$$ f'(0) = 0, \quad h_{\xi}(0,0) = 0.$$ 

We compute the geodesic curvature of the surface curve $r(s)$ at $p$ by evaluating the curvature of the plane curve $f(\xi)$ at $\xi=0$:
$$ k_g =\left. -\frac{f''(\xi)}{(1 + (f'(\xi))^2)^{3/2}} \right |_{\xi=0} = -f''(0),$$
where the negative sign is assigned so the orientation of the plane curve gives outer-pointing normal vector that is consistent with the
orientation of $(\xi,\eta)$ coordinate. It remains to show that $ \nabla_S \cdot t(p) =-f''(0)$.

We represent the tangent normal vector field $t(s) =  u(s) \times T(s)$ using the local Monge parameterization. For 
any point $y$ on the surface curve $r(s)$ we have 
$$ u(y) = \frac{(-h_{\xi}, -h_{\eta}, 1)^T}{\sqrt{1 + h^2_{\xi} + h^2_{\eta}}}$$
for some $\xi$ and then
\begin{align*}
t(y) & = u(y) \times T(y) \\
 & = \frac{(-h_{\xi}, -h_{\eta}, 1)^T}{\sqrt{1 + h^2_{\xi} + h^2_{\eta}}} \times 
\frac{(1,f'(\xi),h_{\xi} + h_{\eta} f'(\xi))^T}{\sqrt{1 + (f'(\xi))^2 + (h_{\xi} + h_{\eta} f'(\xi))^2}} \\
 & = \frac{1}{F(\xi)} \cdot (h_{\eta} \left( h_{\xi} + h_{\eta} f'(\xi)) - f'(\xi),1 + h^2_{\xi} + h_{\xi} h_{\eta} f'(\xi), - h_{\xi} f'(\xi) + h_{\eta} \right)^T,
\end{align*}
where 
$$ F(\xi) = \sqrt{1 + h^2_{\xi} + h^2_{\eta}} \cdot \sqrt{1 + (f'(\xi))^2 + (h_{\xi} + h_{\eta} f'(\xi))^2}.$$
To compute surface divergence we notice that
$$ \nabla_S \cdot t(x) = \frac{\partial t_1}{\partial \xi} + \frac{\partial t_2}{\partial \eta},$$
where $t_1,t_2$ are the components of the vector $t(x)$ respectively in $\xi,\eta$ directions. On one hand we shall have
$$  \frac{\partial t_2}{\partial \eta} = 0$$
at $p$ where $\xi =0$ for that $t_2=1$ achieves its maximum value as a component of a unit normal vector owning to $f'(0)=0,h_{\xi}(0,0)=0$ at $p$. 
On the other hand,
\begin{align*}
\frac{\partial t_1}{\partial \xi}  = & \frac{d t_1(\xi)}{d \xi} \\
 = & \frac{1}{F} \cdot \big [ - (h_{\xi \eta} + h_{\eta \eta} f'(\xi) ) h_{\xi} - ( h_{\xi\xi} + h_{\xi \eta} f'(\xi))h_{\eta} +  \\
  & \qquad  2(h_{\xi \eta} + h_{\eta \eta} f'(\xi)) h_{\eta} f'(\xi) + h^2_{\eta} f''(\xi) - f''(\xi) \big ] - \\
  & \frac{F_{\xi}}{F^2} \big [ 2(h_{\xi \xi} + h_{\xi \eta} f'(\xi)) h_{\xi} + 2 (h_{\xi \eta} + h_{\eta \eta} f'(\xi) ) h_{\eta} \big ] 
\big [(h_{\eta}h_{\xi} + h^2_{\eta} f'(\xi)) - f'(\xi) \big ]
\end{align*}
It follows again from $f'(0)=0,h_{\xi}(0,0)=0$ that $F(0) = 1$ and
$$  \frac{\partial t_1}{\partial \xi} =-f''(0),$$
as requested.
\end{proof}

The last fact to justify is the identity Eq.(\ref{eqn:Lap_phi}). Recalling that for a scalar $a$ and a vector field $v$ on the surface $S$ 
with outer normal vector $n$ it holds true that
\begin{align*}
\nabla_S \cdot (a v) & =  n \cdot \curl (n \times (a v)) \\
  & =  n \cdot \curl (a (n \times v)) \\
  & =  n \cdot \big [ a ~ \curl (n \times v) + \nabla a \times (n \times v) \big ] \\
  & =  a ~ n \cdot \curl (n \times v) + n \cdot \big [\nabla a \times (n \times v) \big ] \\
  & =  a \nabla_S v + n \cdot \big [(\nabla a \cdot v) n - (\nabla a \cdot n) v \big ],
\end{align*}
we shall have
\begin{align*}
\nabla_S \cdot \left( \frac{1}{\epsilon} q' \nabla_S d \right) & = \frac{1}{\epsilon} q' \nabla_S \cdot ( \nabla_S d ) + 
\nabla \left( \frac{1}{\epsilon} q' \right) \cdot \nabla_S d - \nabla \left( \frac{1}{\epsilon} q' \right) \cdot n (\nabla_S d \cdot n) \\
 & = \frac{1}{\epsilon} q' \nabla^2_S d + \frac{1}{\epsilon} \nabla q' \cdot \nabla_S d  \qquad \mbox{(for that $\nabla_S d \perp n$)}\\
 & = \frac{1}{\epsilon} q' \nabla^2_S d + \frac{1}{\epsilon} \big ( \nabla_S q' + n (\nabla q') \cdot n \big ) \cdot \nabla_S d \\ 
 & = \frac{1}{\epsilon} q' \nabla^2_S d + \frac{1}{\epsilon^2} q'' \nabla_S d \cdot \nabla_S d + \big( \frac{1}{\epsilon} \nabla q' \cdot n \big) \cdot (\nabla_S d \cdot n) \\
 & = \frac{1}{\epsilon} q' \Delta_S d + \frac{1}{\epsilon^2} q'' |\nabla_S d|^2
\end{align*}

\section*{References}
\bibliographystyle{plain}

\end{document}